\theoremstyle{plain}
\newtheorem{theorem}{Theorem}[section]
\newtheorem{corollary}[theorem]{Corollary}
\theoremstyle{definition}
\newtheorem{definition}[theorem]{Definition}
\newtheorem{remark}{Remark}[section]
\newcommand\ci{\perp\!\!\!\perp}
\newcommand{\bz}{\mathbf{z}}
\newcommand{\bb}[1]{\mathbb{#1}}
\newcommand{\mc}[1]{\mathcal{#1}}
\newcommand{\bone}[1]{\mathbbm{1}\left\{ #1 \right\}}
\newcommand{\tx}{\widetilde{x}}
\newcommand{\tz}{\widetilde{z}}
\newcommand{\wh}[1]{\widehat{#1}}
\newcommand{\be}{\begin{equs}}
\newcommand{\ee}{\end{equs}}
\newcommand{\fair}{fair} % in case we want to change the term later
\newcommand{\fairness}{fairness}
\newcommand{\tX}{\widetilde{X}}
\newcommand{\tZ}{\widetilde{Z}}
\newcommand{\tmu}{\widetilde{\mu}}
\newcommand{\tF}{\widetilde{F}}
\DeclareFontFamily{U}{matha}{\hyphenchar\font45}
\DeclareFontShape{U}{matha}{m}{n}{
  <-6> matha5 <6-7> matha6 <7-8> matha7
  <8-9> matha8 <9-10> matha9
  <10-12> matha10 <12-> matha12
  }{}
\DeclareSymbolFont{matha}{U}{matha}{m}{n}
\DeclareMathSymbol{\Lt}{3}{matha}{"CE}
\providecommand*{\input@path}{}
\g@addto@macro\input@path{{./}}% append
\title{An algorithm for removing sensitive information: application to race-independent recidivism prediction}
\author[1]{James E. Johndrow \thanks{johndrow@stanford.edu}}
\author[2]{Kristian Lum \thanks{kl@hrdag.org}}
\affil[1]{Department of Statistics, Stanford University}
\affil[2]{Human Rights Data Analysis Group}
\date{\today}
\begin{document}

\maketitle

%\doublespacing
\begin{abstract}
Predictive modeling is increasingly being employed to assist human decision-makers. One purported advantage of replacing or augmenting human judgment with computer models in high stakes settings-- such as sentencing, hiring, policing, college admissions, and parole decisions-- is the perceived ``neutrality" of computers. It is argued that because computer models do not hold personal prejudice, the predictions they produce will be equally free from prejudice. There is growing recognition that employing algorithms does not remove the potential for bias, and can even amplify it if the training data were generated by a process that is itself biased.  In this paper, we provide a probabilistic notion of algorithmic bias. We propose a method to eliminate bias from predictive models by removing all information regarding protected variables from the data to which the models will ultimately be trained. Unlike previous work in this area, our framework is general enough to accommodate data on any measurement scale. Motivated by models currently in use in the criminal justice system that inform decisions on pre-trial release and parole, we apply our proposed method to a dataset on the criminal histories of individuals at the time of sentencing to produce ``race-neutral" predictions of re-arrest. In the process, we demonstrate that a common approach to creating ``race-neutral" models-- omitting race as a covariate-- still results in racially disparate predictions. We then demonstrate that the application of our proposed method to these data removes racial disparities from predictions with minimal impact on predictive accuracy.   
\end{abstract}
{\noindent \flushleft KEY WORDS: risk assessment, algorithmic fairness, criminal justice, neutral predictions, racial bias, recidivism, selection bias}

\section{Introduction} \label{sec:intro}

Statistical and machine learning models are increasingly used to inform high-stakes decisions, including hiring \citep{hoffman2015discretion}, credit scoring \citep{khandani2010consumer}, and throughout all stages of the criminal justice system. In the criminal justice context, predictive models of individuals' future behavior are used to inform judges regarding pre-trial release and bail setting, sentencing, and parole \citep{Phillips:2016aa, dieterich2016compas,brennan2009evaluating, cunningham2006actuarial, dvoskin2001risk, quinsey1998violent, berk2009forecasting}. For example, there is an increasing reliance on predictive models to inform judges about the likelihood that a defendant will re-offend if released. In this case, data about individual defendants is used to train a model with the objective of predicting future re-offense. The model's prediction for a given defendant is then shown to the judge or parole board presiding over that person's case for use in informing parole or sentencing decisions. Given the importance of decisions regarding an individual's personal liberty, it is imperative that any input to the decision-making process-- be it a model's prediction or otherwise-- be ``fair" with respect to legally or socially protected classes such as race, gender, sexual orientation, et cetera.

In the academic literature, there are several competing notions of algorithmic or model fairness, an overview of which can be found in \cite{Berk:2016aa}. In general, we have found that notions of fairness can be divided into three camps. One school of thought does not focus on a particular metric of fairness, but rather assumes a model will be fair if the protected variable(s) are omitted from the analysis. For example, a press release\footnote{http://mitsloan.mit.edu/newsroom/press-releases/mit-sloan-professor-uses-machine-learning-to-design-crime-prediction-models/} for a paper introducing a new recidivism risk assessment model \citep{zeng2015interpretable} acknowledges that recidivism prediction models can easily be mis-used to discriminatory ends if caution is not exercised. The authors purposefully excluded the race variable from a model intended to be used in sentencing because ``we do not want to punish people longer because of their race." The implication that omission of a protected variable will mitigate discrimination with respect to that variable is not unique to this case.\footnote{The focus of this paper was on making an interpretable model, not necessarily a fair one. In this case, the authors were following standard practice in this regard.} In fact, some proprietary software packages used in predictive policing models tout the fairness of their models on the basis of omission of a race variable \citep{Taylor:aa}. Using this procedure, if the permitted variables are correlated with the protected variables, even if the protected variables are omitted, their effects will remain in the estimated model via their correlation with the permitted variables. In the case of regression models, this is known as omitted variable bias \citep{clarke2005phantom}. 

The other two schools of thought acknowledge that algorithmic fairness is a non-trivial problem, but propose different remedies because they define fairness differently. One area of research defines fairness in terms of equivalence of some measure of predictive accuracy among all classes in a protected variable. For example, \cite{dieterich2016compas} argue that fairness is defined by similar accuracy and positive predictive value by class. \cite{zafar2016fairness} defines fairness in terms of equality of misclassification rates across class. A similar notion of fairness was proposed by \cite{hardt2016equality}, which argues that equivalence of false positive and false negative rates more accurately embodies everyday understanding of what it means to be fair. These two notions of fairness are indeed distinct. For example, \cite{chouldechova2016fair} shows the same positive predictive value by protected class and equal false negative and false positive rates cannot both be achieved when the outcome prevalence depends on protected characteristics. \cite{kleinberg2016inherent} shows theoretically that these notions of fairness are usually incompatible. A related literature focuses on methods and algorithms for achieving these notions of fairness by optimizing some utility or loss function subject to constraints that express the fairness criterion mathematically (e.g. \cite{dwork2012fairness}, which also applies to achieving the alternative notion of fairness we adopt here).

A third approach that has gained traction primarily in the computer science and machine learning literature defines fairness in terms of disparate impact on a protected class (see \cite{feldman2015certifying}, \cite{barocas2016big}). Under this definition, a model is typically considered fair if differences in the distribution of the model's predictions conditional on the protected variable do not exceed some pre-determined threshold, as measured by some appropriate notion of distance between probability distributions. In most cases, the allowable difference is zero, which is equivalent to the requirement that the predictive distribution is independent of the protected variable. This notion is sometimes called ``statistical parity'' or ``demographic parity.'' Our paper operates within this definition of fairness and builds upon the extant methodology for mitigating disparate impact.    

Methodology for achieving statistical parity has centered on removing information about the protected variables from the training data by transforming the training data. \citet{kamiran2009classifying} use a n\"{a}ive Bayes classifier to rank each observation in the training data by its probability of belonging to the ``desirable" category.\footnote{The authors actually optimize for a different notion of fairness that is nonetheless closely related to statistical parity.} Based on these rankings, the outcome variable in the training data is adjusted until there is no remaining association between the protected variable and the intended outcome variable. This procedure is limited to binary outcome data and the adjusted data is not re-usable in the sense that one could not then use the covariates to estimate relationships with other outcome variables and still be ensured of non-discriminatory outcomes. \citet{calders2010three} presents three algorithms for preventing a model from producing differential predictions by protected class by transforming the training data in accordance to an objective function that is minimized when the predictions from a model fit to the transformed data are independent of the protected variable. In this case also, the methods are restricted to binary protected classes and binary outcome variables. \citet{feldman2015certifying} propose a method for adjusting or ``repairing" the training data such that the user can tune the amount of permissible bias in models fit to the repaired training data. The authors suggest either removing information about the protected variable entirely or adjusting the training data such that that the differences in conditional predictive distributions cannot exceed the legal definition of disparate impact. One limitation of this approach noted by the authors is that only continuous-type covariates can be repaired. Further, it is not clear how this procedure could be used to protect a continuous variable without discretizing it, an approach that was taken in \cite{adler2016}.  A review and comparison of several more algorithms operating on binary protected and outcome variables can be found in \cite{romei2014multidisciplinary}. 

In this paper, we focus on recidivism prediction and seek to ``protect" a variable that encodes the individual's race\footnote{We use the terminology for racial categorization that is used in the dataset that is the focus of our application. The categories in this dataset are defined as African American, Caucasian, Hispanic, Asian, Native American, and Other}. The objective is to make predictions regarding an individual's future likelihood of re-offense that are fair with respect to that individual's race. Typically, post-release re-offense is measured by re-arrest, and there are many reasons to believe that re-arrest may be observed with bias with respect to race. For example, studies suggest that after controlling for criminal behavior, African Americans are more likely than Caucasians to become incarcerated \citep{bridges1988law}, and whether walking or driving, African Americans are disproportionately stopped and searched by police \citep{simoiu2016testing}. For drug crimes, African American drug users are arrested at a rate that is several times that of Caucasian drug users despite the fact that African American and Caucasian populations are estimated by public health researchers to use drugs at roughly the same rate \citep{langan1995racial}. Thus, fitting models to data for which certain racial groups are {\it observed} committing crime at a disproportionate rate unfairly biases the model's predictions against those racial groups. Given that the outcome variable is observed with bias with respect to the protected variable, we believe that the second set of approaches designed with the objective of achieving equivalent predictive accuracy by race are inappropriate for this particular application, as they ultimately rely upon comparing the model's predictions of re-offense to a fundamentally flawed and biased measure of re-offense: re-arrest.  The first class of approaches-- simply omitting race from the set of covariates used to fit the model-- is equally inadequate in this setting, as the covariates that are permitted to be used in the analysis are highly correlated with race. Thus, in this setting we advocate for the third school of thought-- that which seeks to obtain independence between race and model's predictions. By proposing this standard of fairness, we are arguing that in the absence of information to precisely quantify the differences in recidivism by race, the most reasonable approach is to treat all races as though they are the same with respect to recidivism. In this setting, it is implicit that among those who are classified as the same race, re-arrest is an unbiased measure of re-offense. While this may not be true, the methodology we propose allows one to protect as many variables as are necessary. That is, while we only protect the race variable in our application, we develop a statistical method that could be used to protect as many variables as  necessary to attain the legally or ethically relevant definition of fairness. 

The approach we suggest primarily builds on the work of \citet{feldman2015certifying}. In \citet{feldman2015certifying},
each variable, $X_j$, is transformed via a two-step transformation, in which the empirical quantiles of $X_j$ given the protected variable, $Z$, are mapped to the the quantiles of a distribution, $F_{A}$, which is defined to be the median of each of the conditional quantile distributions at each quantile. That is, the adjusted variable $\tilde{X}_j = F_A^{\leftarrow}(F_{\text{emp}}(X_j))$, where $F_{\text{emp}}$ is the empirical cumulative distribution function of $X_j$. The reliance on the empirical quantiles of each $X_j$ limits the scope of this approach to datasets for which each class in the protected variable has sufficient data such that the empirical distribution is a reasonable approximation to the true distribution. If some classes contain very few data points, this approach is inapplicable. This precludes protecting continuous variables and variables with relatively many classes, although in later work (\cite{adler2016}) the method is extended to better handle categorical variables.  A related fact is that this method is also limited in the sense that, in general, adjustments can only be made on a pairwise basis such that each variable in the newly created dataset is independent of the protected variable. This is insufficient to guarantee that all predictions generated from a model fit to the adjusted dataset will be independent of the protected variable, though in the examples on which the procedure is demonstrated, it works well empirically. Lastly, this approach does not allow for the  marginal adjustment of discrete $X_j$. This is obvious in the binary setting, where all ones would be mapped to the same value as would all zeroes, thus leaving the adjusted dataset fundamentally unchanged from the point of view of its correlation with the protected variable.

We instead approach the problem from a likelihood-based perspective. In this framework, extensions such as adjusting non-continuous variables as well as adjusting variables for which each class in the protected variable has little data become natural. This framework also allows us to make adjustments to mutual independence from the protected variable, rather than just pairwise independence. To do so, we define the problem in terms of a chain of conditional models, as is commonly used in multiple imputation (see \cite{white2011multiple} for an overview). Each variable is adjusted by matching its estimated conditional quantile (conditional on the protected variable and all other previously adjusted variables) to the marginal quantiles for that variable.  Whereas previous work has been limited to protecting only binary or categorical variables and adjusting a limited number of covariates, our approach allows for any number of mixed-scale variables to be adjusted. This greatly expands the range of datasets that can be adjusted and thus expands the universe of problems to which such adjustments may be applied. 

We apply our method to a dataset pertaining to the criminal justice system in Broward County, Florida. This dataset contains several covariates describing an inmate's demographic characteristics and criminal history. The outcome variable of interest is re-arrest within two years of release. We apply our proposed adjustment to the permitted covariates and use both logistic regression and random forest to predict re-arrest. We find that while models fit to the unadjusted data produce drastically different predictive distributions over the probability of re-offense by race -- thus empirically demonstrating the insufficiency of omitting race from the analysis when the goal is statistical parity -- equivalent models fit to the data adjusted using our procedure produce nearly identical predictive distributions by race. Further, the predictive accuracy of our method decreases only slightly due to the adjustment. We also find that random forest or logistic regression fit to only seven ``adjusted" variables -- mostly pertaining to an individual's criminal history -- has substantively equivalent predictive power to proprietary models used for recidivism prediction that use a battery of psychological questionnaires and evaluations in addition to information about the individual's criminal past. 

\section{Method for variable adjustment} \label{sec:method}
We describe a generally applicable method for creating an adjusted set of covariates that are independent of protected characteristics. We begin by presenting a probabilistic interpretation of the problem. Section \ref{sec:coupling} provides theoretical motivation for our proposed methodology. In section \ref{sec:univariate},  we present a method for creating maps that optimally transform {\it univariate} $X_j$ to $\tX_j$ such that all information about $Z$-- a potentially multivariate set of protected variables-- is removed from $\tX_j$. By applying this method to each $X_j$ independently, this section provides a way to achieve pairwise independence between each $X_j$ and $Z$. In section \ref{sec:chaining}, we extend the univariate methodology to allow for the adjustment of multivariate $X$ via conditional chaining to achieve mutual independence between $X$ and $Z$. 
 
\subsection{Setup}
Suppose we have a response $Y$ and predictors $(Z,X)$, where $Z$ represent protected characteristics. We take $X,Z$ to be $d_x$ and $d_z$ dimensional random vectors with arbitrary measurement scale. Consider a generic prediction rule or model for $Y$ given by 
\be
f : X \to \wh Y \label{eq:yhat}
\ee
Our goal is not to use any information about $Z$ in predicting $Y$; that is, we want a {\it fair prediction rule}.\footnote{We emphasize that the term ``fair'' is used here in a mathematical context as a shorthand for the independence condition in \eqref{eq:fair}, which is sometimes called statistical parity. Ultimately, it is up to policymakers and ethicists to determine whether this condition is appropriate in any particular context. However, we argue it is the most appropriate of the existing notions of algorithmic fairness to our motivating application for the reasons outlined in the introduction. } 
\begin{definition}[fair  prediction rule]
A prediction of the form \eqref{eq:yhat} is \emph{\fair} with respect to the protected characteristics $Z$ if and only if
\be
\widehat{Y} \ci Z. \label{eq:fair} 
\ee
\end{definition}

\noindent This definition of fairness places this work within the third of the enumerated schools of thought, and our mathematical characterization is consistent with the objective functions used in the works cited above that operate within this framework.
 
Although $f$ is not a function of $Z$ in \eqref{eq:yhat}, this is insufficient to guarantee $\wh Y \ci Z$ unless $X \ci Z$. In the overwhelming majority of applications, $X$ and $Z$ are dependent, and thus we must take additional measures to ensure $\wh Y$ is \fair. Without belaboring a point raised in the Introduction, a surprising number of researchers -- and we suspect the overwhelming majority of decision makers such as judges and parole boards -- mistakenly assume that any rule of the form \eqref{eq:yhat} is \fair. 

However, there is a simple condition that does guarantee \fairness. Since $Z$ does not appear in our model, we already have $\wh Y \ci Z \mid X$, so that $p(\wh y \mid z, x) = p(\wh y \mid x)$. Observe that
\be
p(\wh y \mid x) &= \int p(\wh y \mid x) p(x \mid z) dx, \\
p(\wh y) &= \int p(\wh y \mid x) p(x) dx
\ee
so that $p(x \mid z) = p(x)$ is sufficient for $p(\wh y \mid z) = p(\wh y)$, which is equivalent to $\wh Y \ci Z$. Thus, we seek to define a new random variable $\tX$ that is independent of $Z$, while still preserving as much ``information'' in $X$ as possible. The next section is concerned with defining $\tX$.

\subsection{Optimal coupling and transport maps} \label{sec:coupling}
Let $(\mc X,d)$ be a Polish space and $c: \mc X \times \mc X \to \bb R$ be a Borel ``cost'' function. Let $\mu,\tmu$ be probability measures induced by random variables $X,\tX$. The transportation distance with respect to $c$ is defined as
\be
g_c(\mu,\tmu) \equiv \inf_{\gamma \in \Gamma(\mu,\tmu)} \int c(x,\tx) d\gamma(x,\tx),
\ee
where $\gamma$ is a \emph{coupling} of $\mu,\tmu$ -- a joint distribution on $\mc X \times \mc X$ with marginals $\mu,\tmu$ -- and $\Gamma(\mu,\tmu)$ is the space of all couplings of $\mu,\tmu$. The transportation distance is the minimal total cost with respect to $c$ of transporting mass from $\mu$ to $\tmu$, and the coupling $\gamma^*$ achieving the minimal cost is the \emph{optimal coupling}, the solution to the Kantorivich transportation problem. In our context, $\gamma^*$ tells us how to find $\tX$ so as to minimize information lost, where information is quantified by $c$.

A natural choice in our setting is to set $c=d^q(x,\tx)$, with $d$ the Euclidean norm. If one later uses any method or algorithm based on linear functions of the covariates -- such as a generalized linear model -- making the Euclidean distance between the original and transformed covariates small will make the loss of predictive accuracy small. This logic can be extended to a broader class of methods
%, such as kernel methods, 
by applying our proposed procedure to nonlinear transformations of $x$. 

When $c(x,\tx) = d^q(x,\tx)$ for $q \ge 1$, the transportation distance is related to the Wasserstein-$q$ distance by $\mc W^q_q(\mu,\tmu) = g_c(\mu,\tmu)$, so the optimal coupling -- when it exists -- is also the coupling achieving the $\mc W_q$ distance. 

\subsection{Univariate transformations} \label{sec:univariate}
When $\mc X = \bb R$ and $d$ is the Euclidean norm, so that $\mu,\tmu$ are associated with distributon functions $F, \tF : \bb R \to [0,1]$,  
\be \label{eq:qtiletrans}
\mc W_q^q(F,\tF) = \int_0^1 |F^{\leftarrow}(p) - \tF^{\leftarrow}(p)|^q dp,
\ee
where $F^{\leftarrow}(p) \equiv \sup \{x \in \bb R : F(x) \le p\}$ is the left-continuous inverse of $F$ (\cite{dall1956sugli, mallows1972note, salvemini1943sul}, see also \cite{ekisheva2006transportation}). \eqref{eq:qtiletrans} does not require that $F$ is continuous. This result allows us to define the optimal coupling explicitly in the case where $F, \tF$ are absolutely continuous with respect to Lebesgue measure.

\begin{remark}[Optimal coupling on $\bb R$] \label{rem:optcouple}
 When $\mc X = \bb R$ with $d$ the Euclidean norm and $F, \tF$ have densities, the optimal coupling with respect to $c=d^q(x,\tx)$ for $q \ge 1$ is associated with the map $g(x) = \tF^{-1}(g^*(x))$ for $g^*(x) = F(x)$.
\end{remark}
\begin{proof}
 \be
 \bb E_F[c(x,g(x))] &= \int_{\bb R} \{x-\tF^{-1}(F(x))\}^q f(x) dx \\
 &= \int_{\bb R} \{F^{-1}(F(x))-\tF^{-1}(F(x))\}^q f(x) dx \\
 &= \int_{[0,1]} \{F^{-1}(p) - \tF^{-1}(p)\}^q dp.
 \ee
 So $g(x)$ achieves the transportation distance, and is therefore associated with the optimal coupling. 
\end{proof}
The proof of remark \ref{rem:optcouple} only required that $g^*(X)$ have a uniform distribution on the unit interval and $F^{\leftarrow}(g^*(X)) = X$ $F$-almost surely. This suggests how to achieve the Wasserstein distance using random maps when $F$ is atomic.

\begin{corollary}[Optimal coupling for atomic $F$ using stochastic maps ] \label{cor:optcoupleatomic}
 Suppose $\mc X = \bb R$ with $d$ the Euclidean norm and $F$ is atomic. Let $\dot{x} = \{\dot{x}_1,\dot{x}_2,\ldots\}$ be the support points of $F$ ordered such that $\dot{x}_j < \dot{x}_{j+1}$, with associated probabilities $\pi_j=\bb P[X = \dot{x}_j]$, and put $\nu_j = \sum_{j' \le j} \pi_j$.  Define a random map $g^*(X)$ by $g^*(X) \mid X=\dot{x}_j \sim \textnormal{Uniform}(\nu_{j-1},\nu_j)$, with $\nu_0 = 0$. Then the random map $g(X) = \tF^{\leftarrow}(g^*(X))$ achieves the optimal coupling.
\end{corollary}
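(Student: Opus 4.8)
The plan is to verify the two conditions singled out in the sentence following Remark~\ref{rem:optcouple}: that the stochastic map $g^*$ (i) has a uniform marginal, $g^*(X) \sim \textnormal{Uniform}(0,1)$, and (ii) recovers $X$ through the generalized inverse, i.e.\ $F^{\leftarrow}(g^*(X)) = X$ holds $F$-almost surely. Once both are in hand, the computation in the proof of Remark~\ref{rem:optcouple} carries over: writing $c(x,\tx)=\{x-\tx\}^q$ and taking expectation over the joint law of $X$ and the auxiliary randomness,
\[
\bb E\big[c(X,g(X))\big] = \bb E\big[\{F^{\leftarrow}(g^*(X)) - \tF^{\leftarrow}(g^*(X))\}^q\big] = \int_0^1 \{F^{\leftarrow}(p) - \tF^{\leftarrow}(p)\}^q\, dp = \mc W_q^q(F,\tF),
\]
where the first equality uses (ii), and the second uses (i) together with the fact that the integrand depends on $(X,g^*(X))$ only through $g^*(X)$. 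Since $\mc W_q^q(F,\tF) = g_c(\mu,\tmu)$ is the infimum of the cost over all couplings by \eqref{eq:qtiletrans}, exhibiting a coupling that attains it establishes optimality; it then remains only to note that $(X,g(X))$ is a genuine coupling of $\mu,\tmu$, which follows because the quantile transform $\tF^{\leftarrow}$ applied to the $\textnormal{Uniform}(0,1)$ variable $g^*(X)$ has law $\tmu$.

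For (i), I would compute the marginal density of $g^*(X)$ as a mixture. Conditional on $X=\dot x_j$, the variable $g^*(X)$ is uniform on $(\nu_{j-1},\nu_j)$, whose length is $\nu_j-\nu_{j-1}=\pi_j$, so its conditional density equals $1/\pi_j$ on that interval. Weighting by $\bb P[X=\dot x_j]=\pi_j$ and summing, the marginal density at any $p$ interior to some $(\nu_{j-1},\nu_j)$ equals $\pi_j\cdot(1/\pi_j)=1$. Because the half-open intervals $(\nu_{j-1},\nu_j]$ partition $(0,1]$, this density is identically $1$ on the unit interval, giving the claimed uniformity.

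For (ii), I would evaluate $F^{\leftarrow}(p)=\sup\{x:F(x)\le p\}$ at the values $p$ produced by $g^*$. The atomic CDF satisfies $F(\dot x_j)=\nu_j$ and is constant equal to $\nu_{j-1}$ on $[\dot x_{j-1},\dot x_j)$. Hence for $p\in(\nu_{j-1},\nu_j)$ one has $F(x)\le p$ exactly for $x<\dot x_j$ (the jump to $\nu_j>p$ occurs at $\dot x_j$), so $F^{\leftarrow}(p)=\dot x_j$. Conditional on $X=\dot x_j$, the variable $g^*(X)$ lands in the open interval $(\nu_{j-1},\nu_j)$ with probability one, so $F^{\leftarrow}(g^*(X))=\dot x_j = X$ almost surely, and aggregating over $j$ gives (ii).

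The only delicate point is the bookkeeping at the atoms in step (ii): one must match the direction of the inequality in the definition of $F^{\leftarrow}$ to the half-open intervals assigned by $g^*$, and check that the finitely-or-countably many boundary values $\{\nu_j\}$ carry no Lebesgue mass, so that the almost-sure statement is unaffected. This is also why the stochastic map is needed -- spreading the atom $\dot x_j$ uniformly across $(\nu_{j-1},\nu_j)$ is precisely what makes $g^*(X)$ uniform while still inverting back to $\dot x_j$ under $F^{\leftarrow}$, a property the deterministic map of Remark~\ref{rem:optcouple} cannot supply when $F$ has atoms.
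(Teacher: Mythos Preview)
Your proposal is correct and follows exactly the same approach as the paper, whose entire proof consists of the single assertion that $g^*(X)\sim\textnormal{Uniform}(0,1)$ marginally and $F^{\leftarrow}(g^*(X))=X$ a.s. You have simply supplied the details behind those two claims and spelled out why they suffice via the computation of Remark~\ref{rem:optcouple}.
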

\begin{proof}
 $g^*(X) \sim \text{Uniform}(0,1)$ marginally and $F^{\leftarrow}(g^*(X)) = X$ a.s. 
\end{proof}

In order to  achieve $g(X) \perp Z$ within the class of optimal transport maps above, we must have 
\be \label{eq:Tdefs}
g(X,Z) &= \tF^{\leftarrow}(g^*_{x \mid z}(X,Z)),
\ee
where $g^*_{x \mid z}$ is either the conditional distribution $F_{x \mid z}(X, Z)$ when $F$ is continuous, or is a random variable constructed as in Corollary \ref{cor:optcoupleatomic} with $\pi_j = \bb P[X = x_j \mid Z]$ when $F$ is atomic. This immediately implies an algorithm for transforming a univariate $X$ to $\tX $ such that $\tX \ci Z$ with minimal information loss, given in algorithm \ref{alg1}.

%%algorithm box
\RestyleAlgo{boxruled}
\LinesNumbered
\begin{algorithm}[ht]
  \caption{Univariate transformations of variables \label{alg1}}
  \KwData{$x = [x_1, ..., x_n]$, $\bz = [\bz_1, ..., \bz_n]$ where $x \mid \bz \sim F_{\bz}$, target distribution $\tilde{F}$ }
  \KwResult{$\tx = [\tx_1, ..., \tx_n]$, where $\tx \sim \tF$ and $\tx \ci \bz$.}
  \For{i = 1, ..., n}{
  \If{$X$ is atomic}{
  set $x_i^- = \max\{ \dot{x}_k : \dot{x}_k < x_i, k = 1, 2, ...\}$\\
  \If{$x_i^- =\emptyset$}{set $x_i^- = - \infty$} 
  set $l(x_i) = F_{\bz_i}(x_i^-)$; $r(x_i) = F_{\bz_i}(x_i)$ where  \\
  sample $u_i = \text{Uniform}(l(x_i), r(x_i))$ \\
  set $\tx_i = \tF^{\leftarrow}(u_i)$}
  \If{$X$ is continuous}{
   set $\tx_i = \tF^{\leftarrow}(F_{\bz_i}(x_i))$
  }
  }
\end{algorithm}

In the above, we have assumed that $F_{\bz}$ and $\tF$ are known. In practice, the conditional distribution of $X | Z \sim F_{\bz} $ is typically unknown and must be estimated from the data. In the example we present below, we have found traditional, parametric regression models to be successful at estimating $F_{\bz}$ if the analyst employs appropriate model selection and fit diagnostic techniques. A more automated approach would likely require more exotic non-parametric models to effectively model the conditional distributions without human input. In practice, how one chooses or estimates $\tF$ is less critical. As long as the size of the support of $\tF$ is at least as large as that of $F$, the specific choice $\tF$ does not affect the ranks of $\tx$. Thus any prediction rule that depends only on the ranks of the predictors will be invariant to choice of $\tF$. This includes regression-tree procedures, such as random forests. Moreover, it is typical in applied statistics and regression modeling to transform predictors prior to model fitting for computational reasons or to obtain better predictive accuracy, which would neutralize any choice we make for $\tF$. On balance, we suggest taking $\tF$ to be the marginal distribution $F_x$. This ensures that researchers using the transformed data still have access to the original marginal distribution of the data, which may be of significant value in its own right. 

In this section, we have demonstrated how to transform a univariate $X \rightarrow \tX$, where $\tX \ci Z$, with minimal information loss. This procedure could be applied independently to each covariate $X_j$ to achieve pairwise independence with $Z$, though it is not guaranteed that the resulting set of independently transformed covariates will achieve mutual independence with $Z$. 
Regardless, pairwise adjustments may be desirable to retain interpretability of the covariates. Under a pairwise adjustment, each covariate could be interpreted as a simple $Z$-adjusted version of itself. For example, if $X_j$ is the number of prior arrests and $Z$ is race, a pairwise adjustment would result in a race-adjusted measure of the number of prior arrests. The following section is concerned with extending the above results to achieve mutual independence with $Z$.

\subsection{Multivariate adjustments via chaining}\label{sec:chaining}

Now, let $X$ be a random vector. We propose to construct an analogous multivariate transformation of $X$ as

\be
\tx = g(X,Z) = (g_1(X_1,\tZ^{(1)}),g_2(X_2,\tZ^{(2)}),g_3(X_3,\tZ^{(3)}),\ldots,g_{d_x}(X_{d_x},\tZ^{(d_x)})), \label{eq:map} 
\ee
where $\tZ^{(j)} := \{Z,\tX_{1:j-1}\}$ for $j>1$ and $\tZ^{(1)} := \{Z\}$. The ordering $X_1,\ldots,X_{d_x}$ of the $X$ variables is arbitrary, though some orderings may be practically convenient for a given application.

Using basic rules of conditional probability, $p(\tx \mid z)$ can be decomposed as
\be
p(\tx \mid z) = \prod_j p(\tx_j \mid \tx_{1:(j-1)},z) = 
\prod_j p(g_j(x_j,\tx^{(j)}) \mid \tx^{(j)}).
\ee

From the above, it is clear that $g_j(X_j, \tZ^{(j)}) \perp \tZ^{(j)}$.  So for each element of the product, $p(g_j(X_j,\tZ^{(j)}) \mid \tZ^{(j)})$ can be replaced with $p(g_j(x_j, \tz^{(j)})) = p(\tx_j)$, and the joint distribution reduces to 

\be 
p(\tx \mid z) = \prod_j p(\tz_j)
\ee

\noindent and $\tX$ is mutually independent of $Z$.  Consequently, we refer to \eqref{eq:map} as a \emph{transformation to mutual independence}.

Although $\tX$ is independent of $Z$, it is not independent of $X$. The map in \eqref{eq:map} preserves information in $X$ by maintaining conditional ranks -- if $F_{x_j \mid \tz^{(j)}}(x_{j},\tz^{(j)}) > F_{x_j \mid \tz^{(j)}}(x_j', \tz^{(j)})$, then $\tx_j > \tx_j'$. That is, $\tx_j$ is a measure of how large $x_j$ is conditional on  $z$ and the other adjusted covariates. Informally speaking, $\tx_j$ is the part of $x_j$ which cannot be predicted by $\{z, \tx_{1:j-1}\}$. 

The key to this approach is reliably estimating the $g_j$, which in turn requires good estimators $\wh F^{\leftarrow}_{x_j}$ and $\wh F_{x_j \mid \tz^{(j)}}$. Estimation of $F_{x_j \mid \tz^{(j)}}$ is an exercise in regression modeling. When $Z$ is low dimensional, it may be appropriate to obtain the conditional through a nonparametric estimate of the joint $p(z,x_j)$. For the particular case of the recidivism data, likelihood-based parametric regression models were more successful. Ultimately, the better the estimator of the conditional distribution $\wh F_{x_j \mid \tz^{(j)}}$, the closer to fair any prediction rule $\wh y$ estimated on $\tx$, so it is critical to construct these estimators with care.  

\section{Simulation Example}
In order to illustrate our proposed method, we present a simple simulation example. We let $Z \sim \text{Bern}(0.5)$ be the protected variable. We sample two covariates, $X_1 \mid Z  \sim N(Z + 4, 1)$ and $X_2 \mid X_1, Z \sim \text{Pois}(\mu)$, where $\log(\mu) = -1 + \frac{1}{2} X_1Z + \frac{1}{10}X_1 + \frac{1}{6} Z$.  Finally, we define the dependent variable as $Y \sim N(2X_1 + X_2 + Z, 1)$. We simulate one realization from this model with sample size $n = 10,000$. Realizations from this model are denoted by lowercase variables, $z$, $y$, $x_1$, and $x_2$. A density plot of $y$ given $z$ is shown in Figure \ref{fig:sim-example-data}. The goal of our procedure is to make $f(\hat{y} \mid z=0)=f(\hat{y} \mid z=1)$ for a generic prediction rule $\hat{y}$. 

\begin{figure}[h]
\centering
\includegraphics[width=2in]{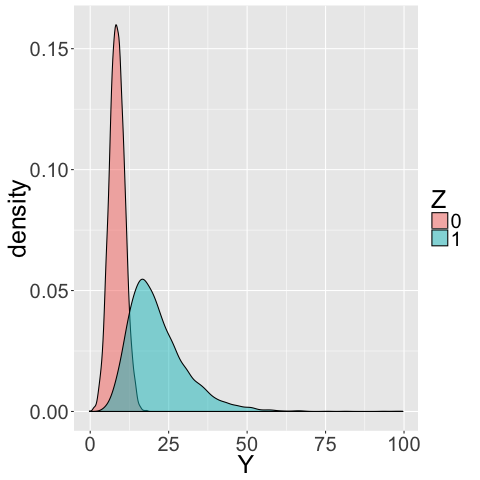}
\caption{\label{fig:sim-example-data} A comparison of the distributions of $y \mid z$ for each value of $z$ in the simulated data example.}
\end{figure}

We compare predictions of $y$ given $\tilde{x}$ by comparing three methods for producing $\tilde{x}$: (1) do no adjustment of $x$ (so $\tilde{x}=x$); (2) perform pairwise transformations to independence to produce $\tilde{x}$; and (3) perform transformations to mutual independence to produce $\tilde{x}$. In all cases, we omit $z$ from the set of covariates used to fit $y$. Fitted values are calculated as $\hat{y} = \tilde{x}\hat{\beta}$ where $\hat{\beta}$ is the least squares estimate from regression of $y$ on $\tilde{x}$. We compare the distribution of $\hat{y} \mid z$ under each adjustment procedure. 

For procedure 2, we must estimate both $\widehat{F}_{x_1 \mid z}$ and $\widehat{F}^{\leftarrow}_{x_1}$. To estimate the former, we fit a linear regression with Gaussian errors of $x_1$ on $z$, so that $\widehat{F}_{x_1 \mid z} = \Phi((x_1 - \hat{x}_1)/\hat{\sigma}^2)$, where $\hat{\sigma}^2$ is the estimated variance of $X_1 \mid Z$ and $\hat{x}_1$ is the fitted value of $x_1 \mid z$ under our estimated model. We set $\widehat{F}^{\leftarrow}_{x_1}$ to be the inverse empirical quantile function of $x_1$. Then, for each observation $i = 1, ..., n$, $\tx_{i1}= \widehat{F}^{\leftarrow}_{x_1}(\Phi(x_{i1} - \hat{x}_{i1})/\hat{\sigma}^2))$. 

Because $X_2$ is atomic, we use the procedure implied by Corollary \ref{cor:optcoupleatomic}. We first fit a Poisson regression of $x_2$ on $z$, resulting in fitted values $\hat{x}_2$. Under this fitted model,  $F_{x_2 \mid z}$ is the CDF of a Poisson distribution with rate parameter $\hat{x}_2$. We then sample $u_i \sim \text{Uniform}(l_i, r_i)$ for $l_i = \widehat{F}_{x_2 \mid z}(x_{i2}-1; \hat{x}_{i2})$ and $r_i = \widehat{F}_{x_{2} \mid z}(x_{i2}; \hat{x}_{i2})$. We again set  $\widehat{F}^{\leftarrow}_{x_2}$ to be the empirical quantile function of $x_2$, and $\tx_{i2}=  \widehat{F}^{\leftarrow}_{x_2}(u_i)$. This procedure is referred to as ``adjusted-pairwise". 

For procedure 3, we jointly adjust $x_1, x_2 \rightarrow \tx_1, \tx_2$. To do this, we make two transformations, the first by estimating the conditional distribution of $X_2$ given $Z$ and the second by estimating the conditional distribution $X_2$ given $\tX_1$ and $Z$.  For the first transformation, we use the same procedure as described above, so $\tx_1 = \widehat{F}^{\leftarrow}_{x_1}(\Phi(x_{i1} - \hat{x}_{i1})/\hat{\sigma}^2))$. For the second, we estimate $\widehat{F}_{x_2 \mid \tx_1, z}$ by fitting a Poisson regression of $x_2$ on $z$ and $\tx_1$, yielding fitted values $\hat{x}_2$. Then $\widehat{F}_{x_2 \mid \tx_1, z}$ is given by the CDF of a Poisson distribution with rate $\hat{x}_2$. Similar to the above, we make a stochastic transformation by sampling $u_i \sim \text{Uniform}(r_i, l_i)$ with $l_i = \widehat{F}_{x_2 \mid z, \tx_1^{(3)}}(x_{i2}-1; \hat{x}_{i2})$ and $r_i = \widehat{F}_{x_{2} \mid z, \tx_1^{(3)}}(x_{i2}; \hat{x}_{i2})$. We again use the empirical quantile function of $X_2$ as $\widehat{F}^{\leftarrow}_{X_2}$ to obtain $\tx_{2i} = \widehat{F}^{\leftarrow}_{X_2}(u_i)$ We refer to this procedure as ``adjusted".  

Figure \ref{fig:sim-results} shows the empirical distribution of the fitted $\hat{y}$ under each of the three adjustment procedures. From this it is clear that the unadjusted model does little to equalize the distributions of $\hat{y}$ conditional on $z$. The pairwise adjustment does reduce some of the discrepancy in the predictive distributions of $\hat{y} \mid z$, but the distribution of $\hat{y} \mid z=1$ has a much longer right tail than the distribution of $\hat{y} \mid z=0$. In contrast, the joint adjustment results in $\hat{f}(\hat{y} \mid z=1) \approx \hat{f}(\hat{y} \mid z=0)$, achieving the goal of the procedure. 

\begin{figure}[h]
\centering
\includegraphics[width=1.75in]{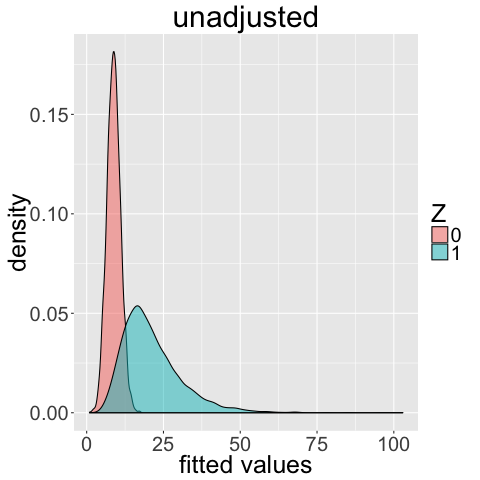}
\includegraphics[width=1.75in]{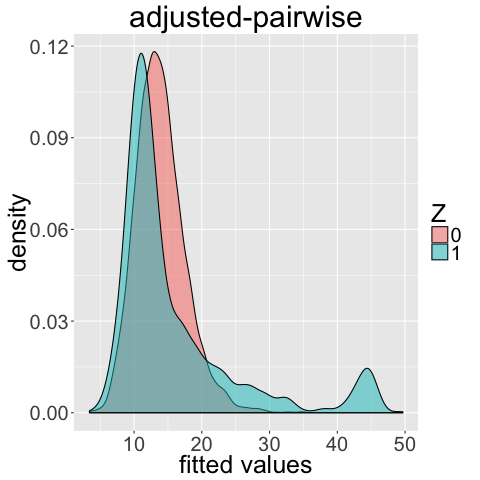}
\includegraphics[width=1.75in]{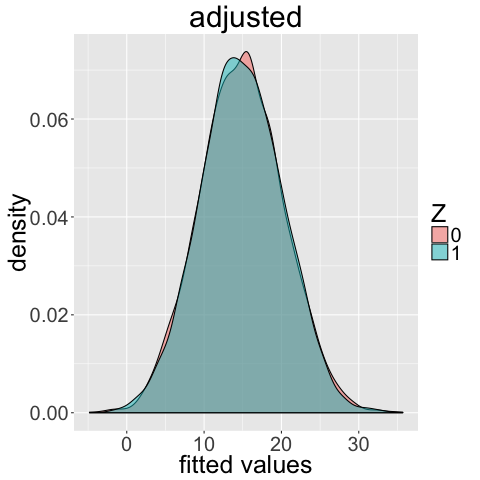}
\caption{\label{fig:sim-results}The distributions of $\hat{y} \mid z$ under each of the adjustment procedures.}
\end{figure}

\section{Application: Removing racial bias in recidivism risk assessment} \label{sec:application} 

ProPublica recently released a news article on the use of predictive analytics in recidivism risk assessment (\cite{angwin2016machine}). The focus of the the investigation was on whether risk assessment tools were disproportionately recommending non-release for African American defendants.  The reporters compiled an extensive dataset from the criminal justice system in Broward County, Florida, combining detailed individual-level criminal histories with predictions from a popular risk assessment tool, COMPAS. COMPAS (an acronym for Correctional Offender Management Profiling for Alternative Sanctions) is a proprietary software tool developed by Northpointe, Inc. that predicts a defendant's likelihood of failing to appear in court, re-offending, and violently re-offending. In order to produce the predictions, a proprietary algorithm is fit to several covariates, including a battery of psychological questions administered at the time of arrest. For each type of prediction, COMPAS produces a decile score (deciles of the predicted probability of re-arrest) and a categorical score consisting of three categories-- ``low", ``medium", and ``high" risk. In order to assess the accuracy of the recidivism predictions, the ProPublica researchers compared each person's COMPAS prediction to an indicator of whether they had been re-arrested within two years of release.

ProPublica's over-arching conclusion was that the COMPAS tool was ``racially biased" based on the observation that of those who were not re-arrested, 45 percent of African Americans were mis-classified by the model as future recidivists, where as only 24 percent of Caucasian defendants were similarly misclassified (\citet{angwin2016machine}). In a rebuttal, Northpointe asserted that the disparities in the proportion of false positives was entirely due to differing baseline rates of recidivism between African American and Caucasian defendants. They argued that bias should be assessed not in terms of the false positive rate, but rather, in terms of the group-wise positive predictive value or overall predictive accuracy. Using the same data used in ProPublica's analysis, Northpointe and others showed that the predictive accuracy of their model was equivalent for African American and Caucasian defendants \citep{dieterich2016compas, flores2016false}. ProPublica implicitly defined a prediction rule $\hat{Y}$ as ``biased" by race if

\be \label{eq:propublica}
\bb P(\hat{Y} = 1 \mid Y=0,  \text{African American})  \ne \bb P(\hat{Y} = 1 \mid Y=0, \text{Caucasian}). 
\ee
On the other hand, in one of their analyses Northpointe defined $\wh Y$ as ``biased" if 
\be \label{eq:northpointe}
\bb P(Y=1 \mid \hat{Y}=1, \text{African American}) \ne \bb P(Y = 1 \mid \hat{Y}=1, \text{Caucasian}), \ee
in other words, Northpointe adopted the equal conditional false positive rate notion of fairness. Clearly, when $P(Y=0 \mid \text{Caucasian}) \ne \bb P(Y= 0 \mid \text{African American})$, both notions of fairness cannot simultaneously be achieved, a fact that was demonstrated in \cite{chouldechova2016fair} and \cite{kleinberg2016inherent}. Because the African American population is measured as having a higher marginal recidivism rate, in order to achieve \eqref{eq:propublica}, the predictive accuracy of the model for African Americans would have to be higher than for Caucasians. Conversely, if we demand that \eqref{eq:northpointe} hold, the proportion of predictions that are false positives will necessarily be higher for African Americans. 

Ultimately, both definitions assume that re-offense is fairly measured by re-arrest. Given that the literature suggests that African Americans are more likely to be re-arrested for re-offending, neither definition seems particularly appropriate for this setting. One cannot actually know the extent to which the observation of re-offense is biased. Thus, as we argue above, a reasonable way to proceed is to assume that the distribution of risk is independent of race. In the absence of adequate evidence that the populations exhibit differing levels of re-offense post-release, one should revert to a ``null hypothesis" that the groups are the same in this regard. To this end, we implement the procedure described above to remove all information about race from the covariates we will use for prediction, thus guaranteeing similar distributions of estimated risk by race.

\subsection{Data}
For each defendent in the time period, ProPublica collected several measures of criminal history: the number of misdemeanor, felony, and other charges accrued as a juvenile (denoted respectively by juv\_misd\_count, juv\_fel\_count, juv\_other\_count); the number of adult prior offenses (prior\_count); the defendant's sex (sex); and age at the time of the crime (age). These are the covariates that make up $x$. The dataset also includes the race of the defendant (race), which is our protected variable, $z$. The response, $y$,  is an indicator of whether the defendant was re-arrested within two years of release. Using these data, the objective is to construct a new dataset $\tx$ that contains no information about $z$ so that any prediction rule of the form \eqref{eq:yhat} applied to the data set will satisfy  \eqref{eq:fair}.

\subsection{Dependence between race and other covariates in recidivism data}
We begin by assessing dependence between $z$ and $x$ in the data to determine whether transformations to independence are likely to have a meaningful effect. We test for pairwise dependence by discretizing continuous or count variables and summarizing data on pairs of variables in a two-way contingency table. We then compute the G statistic, 
$G(x_1,x_2) = 2 n \sum_{c_1=1}^{d_1} \sum_{c_2=1}^{d_2} \wh \pi_{c_1 c_2} \log[(\wh \pi_{c_1 c_2} )/(\wh \pi_{c_1 \cdot} \wh \pi_{\cdot c_2})]$, where $d_1$ and $d_2$ are the number of unique values of variables $x_1,x_2$, and $\wh \pi_{c_1 c_2} = n^{-1} \sum_{i} \bone{x_{i1} = c_1, x_{i2} = c_2}$,  $\wh \pi_{c_1 \cdot} = \sum_{c_2=1}^{d_2} \wh \pi_{c_1 c_2}$, and  $\wh \pi_{\cdot c_2} = \sum_{c_1=1}^{d_1} \wh \pi_{c_1 c_2}$ are the empirical cell probabilities of the contingency table. Clearly, $G$ is a scaled sample estimate of the mutual information between the joint distribution of the discretized variables and the product of their marginal distributions. The $G$ test is in fact a likelihood ratio test of the null hypothesis $H_0 : x_1 \perp x_2$ under the multinomial likelihood, and the test statistic has asymptotically a $\chi^2_{(d_1-1)(d_2-1)}$ distribution under the null hypothesis of independence.

We compute the $G$ statistic for all pairs of variables $(z,x_j)$ consisting of $z$ and one component of $x$. The $p$-values of the tests -- computed using the asymptotic distribution of the test statistic -- are shown in Table \ref{tab:gtestsorig}. There is strong evidence to reject the null hypothesis of independence for all of the pairs, even when adjusting for multiplicity using the method of \cite{benjamini1995controlling}. This indicates that a prediction rule $f : x \to \wh y$ is unlikely to be fair for race, and that to guarantee a fair prediction rule we need to estimate and apply a transformation to independence. Put another way, a model which simply excludes race is unlikely to result in fair predictions, as the effect of race will be encapsulated in the estimated effects of each of the variables included in the model. 

% latex table generated in R 3.3.2 by xtable 1.8-2 package
% Tue Mar  7 23:19:44 2017
\begin{table}[ht]
\centering
\caption{p values for G tests of the null hypothesis of pairwise 
independence between race and the indicated variable, either unadjusted for multiplicity, or adjusted using the method 
of Benjamini and Hochberg (BH)} 
\label{tab:gtestsorig}
\begin{tabular}{rrr}
  \hline
 & unadjusted & BH \\ 
  \hline
sex & 8.84E-08 & 1.06E-07 \\ 
  juv\_fel\_count & 8.00E-21 & 1.20E-20 \\ 
  juv\_misd\_count & 1.91E-21 & 3.82E-21 \\ 
  juv\_other\_count & 2.72E-07 & 2.72E-07 \\ 
  priors\_count & 6.73E-58 & 4.04E-57 \\ 
  log(age) & 7.22E-49 & 2.17E-48 \\ 
   \hline
\end{tabular}
\end{table}

\subsection{Transformations to independence}
We now estimate maps of the form \eqref{eq:map} for each $x_j$ in the recidivism data. We first develop conditional density estimates $\wh F_{x_j \mid \tz^{(j)}}$ for each $x_j$. Of the six $x_j$, one (sex) is binary, one ($\log(\text{age})$ -- henceforth simply ``age'') is continuous, and the other four, which relate to prior criminal record, are counts. A pair plot, showing visualizations of the pairwise joint distributions of each of the covariates and race, is shown in Figure \ref{fig:marginals}.  The criminal record variables -- juv\_misd\_count, juv\_fel\_count, juv\_other\_count, and priors\_count -- are highly dispersed counts, and there is evidence of substantial dependence between most pairs of variables.   
\begin{figure}[h]
\centering
\includegraphics[width=\textwidth]{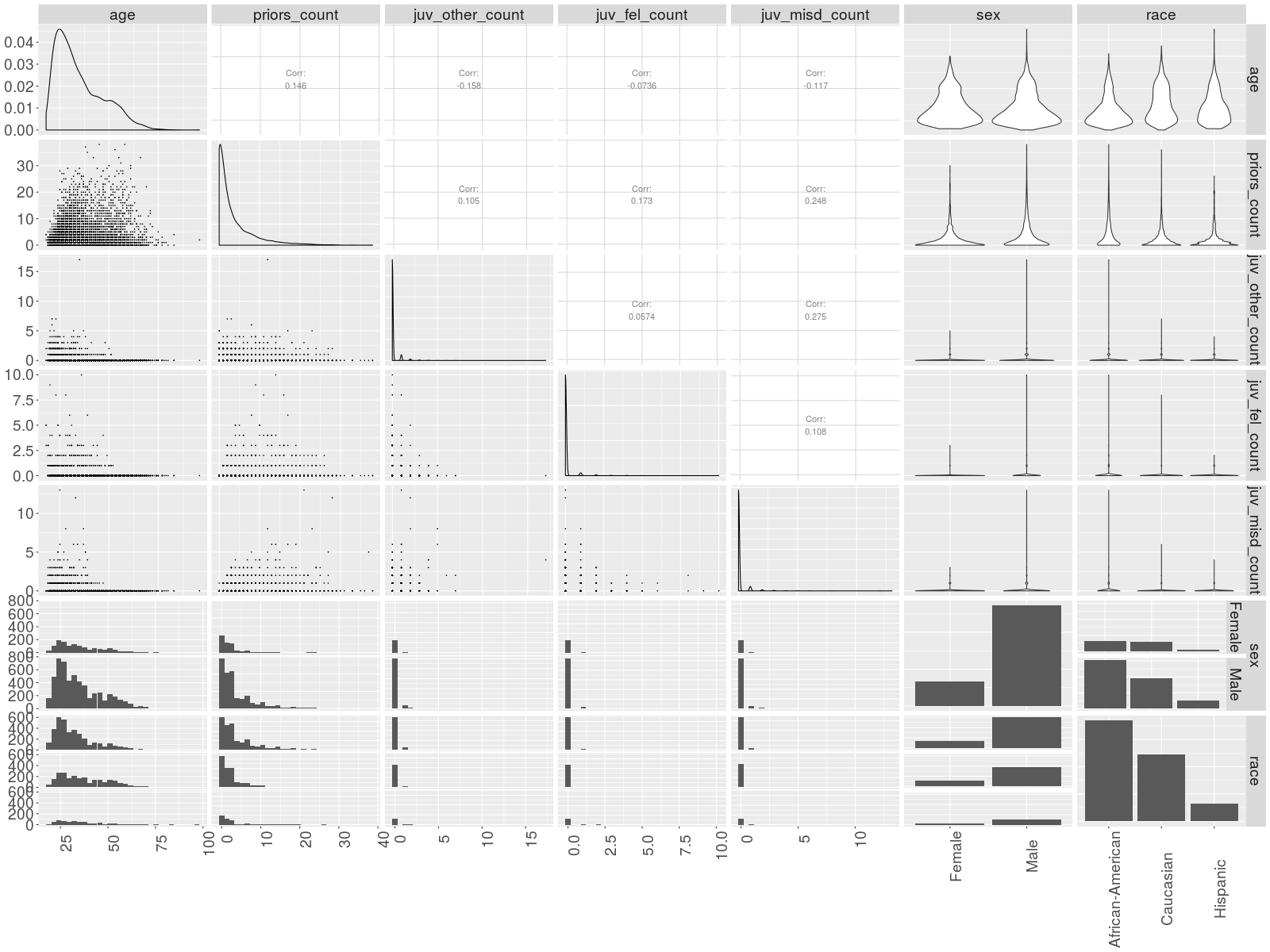}
\caption{Visualizations of marginal and pairwise conditional distributions of covariates. Row and column labels indicate which variables are compared. The diagonal shows marginal distributions; upper and lower triangles of the plot matrix show visualizations of the conditional distribution of row variables given column variables.} \label{fig:marginals}
\end{figure}

In constructing the sequence of conditional models, it makes sense to estimate $\wh F_{x \mid z}$ for the $x$ variables with more complicated marginal distributions first, which facilitates estimation of richer models. Based on Figure \ref{fig:marginals}, we order the variables as: $x_1 = \text{age}$, $x_2 = \text{prior\_count}$, $x_3=\text{juv\_other\_count}$, $x_4 = \text{juv\_fel\_count}$, $x_5 = \text{juv\_misd\_count}$, and $x_6 = \text{sex}$. The protected variable is $z = \text{race}$. We apply the procedure described in section \ref{sec:method} to estimate a transformation of the form \eqref{eq:map}. In every case, we estimate the marginal distribution of $x_j$, $\wh F_{x_j}$,  using the empirical CDF. In constructing the chain of conditional models, we always include discretized versions $\tilde{x}^*_1$ and $\tilde{x}^*_2$ of $\tilde{x}_1$ or $\tilde{x}_2$, respectively, whenever the continuous variable is included in the model. This captures nonlinearity in the conditional mean of the other variables. The cutpoints used for discretization are: $\{18, 19, 20, Q(0.1,\tx_1),Q(0.2,\tx_1),\ldots,Q(1,\tx_1)\}$ for $\tx^*_1$ and $\{Q(0.1,\tx_2),Q(0.2,\tx_2),\ldots,Q(1,\tx_2)\}$ for $\tx^*_2$, where $Q(p,x)$ is the $p$-empirical quantile of $x$. In every case, we estimate $\wh F^{\leftarrow}_{x_j}$ using the empirical quantile function of $x_j$, and we use our estimated $\wh F_{x_j \mid \tz_{j-1}}$ and $\wh F^{\leftarrow}_{x_j}$ and the data to obtain $\tx_j$ using Algorithm \ref{alg1}. 

The $\hat{F}_{x_j \mid \tz_{j-1}}$ are estimated as follows. 

\begin{enumerate}
\item Estimate $\wh F_{x_1 \mid z}$ using the empirical CDF of $x_1$ separately for each value of $z$. 

\item Set $\tz_1 = (z,\tx_1,\tx_1^*)$ and estimate $\wh F_{x_2 \mid \tz^{(1)}}$ by zero-inflated negative binomial regression of $x_2$ on  $\tz^{(1)}$.

\item Put $\tz_2 = (z,\tx_1,\tx_1^*,\tx_2,\tx_2^*)$ and estimate $\wh F_{x_3 \mid \tz^{(2)}}$ using a zero-inflated negative binomial regression of $x_3$ on $\tz^{(2)}$.

\item Estimate $\wh F_{x_4 \mid \tz^{(3)}}$ using a zero-inflated Poisson regression of $\tz^{(3)}$ on $\tx_4$. 

\item Estimate $\wh F_{x_5 \mid \tz^{(4)}}$ using zero-inflated Poisson regression of $\tz^{(4)}$ on $x_5$.

\item Estimate $\wh F_{x_6 \mid \tz^{(5)}}$ using logistic regression of $\tz^{(5)}$ on $x_6$. 
\end{enumerate}

We repeat the above $M$ times and save each of the transformed datasets, $\tx^{(m)} = \{ \tx_1, \tx_2, ..., \tx_6\}$ for $m = 1, ..., M$. Each resulting $\tx^{(m)}$ is stochastic because all of the $\wh F$ are discrete.  While any $(y,\tx)$ generated in this way is fair with respect to race, individual predictions depend on the sampled values $q(x) \sim \text{Uniform}(a(x), b(x))$ for all of the discrete variables, and interval estimates of parameters will understate uncertainty resulting from the stochastic nature of the maps $g_j$. Consequently, in generating predictive values for individual subjects or estimating uncertainty in model parameters, we use an average over all $M$ fair datasets $(y,\tx)$. This approach of creating multiple datasets is also used in the privacy settings \citep{reiter2005releasing} and multiple imputation \citep{rubin2004multiple, reiter2007multiple}, where a common default value is $M=10$ \citep{buuren2011mice}. In the fairness setting, we have the additional goal of limiting the effect of stochastic synthetic data $\tx$ on individual predictions, so we use a larger default value of $M=50$. 

If $\wh F_{x_j \mid \tz^{(j)}}$ were the exact conditional distribution $F_{x_j \mid \tz^{(j)}}$, then $\tx$ would satisfy $\tx \perp z$. Of course, $\wh F_{x_j \mid \tz^{(j)}}$ is an estimate, and thus it will differ from $\wh F_{x_j \mid \tz^{(j)}}$ in finite samples, and even asymptotically when $\wh F_{x_j \mid \tz^{(j)}}$ is misspecified. Therefore, we evaluate model fit for each conditional model separately, and recommend against applying a ``black box'' or automated approach to constructing the conditionals. We expect that in most applications, the number of predictors in $x$ will be relatively small, as is the case in our recidivism application, making it practicable to construct each conditional density estimate carefully. 

Because it is important that the entire conditional distribution is estimated well as opposed to just the conditional expectation, we assess model fit by plotting the fitted conditional CDFs $\wh F_{x_j \mid \tz^{(j)}}(x_j,\tz^{(j)})$ by race. If the fit is good, this should be close to the uniform distribution on the unit interval. Figure \ref{fig:qdens} gives results for the estimated conditional CDFs by race, all of which are approximately uniform. This indicates that we have sampled $\tx_j$ from the marginal distribution of $x_j$ for each race category. However, it is insufficient to guarantee that all information about race has been removed from the transformed dataset, as the model may be badly misspecified, e.g. not enough interaction variables were included in the model. We further analyze the success of the procedure's ability to achieve independence from the protected variable by computing Cramer's V statistics for every pair of variables in the adjusted data, discretized to have 10 unique values (or fewer, if the original variable is discrete with $<10$ unique values). Values in the original data are shown for comparison. In most cases, Cramer's V is reduced to near zero in the adjusted data, indicating that we have successfully removed information about race from the adjusted data, at least up to two-way interactions.

\begin{figure}[h]
 \centering
 \includegraphics[width=0.8\textwidth]{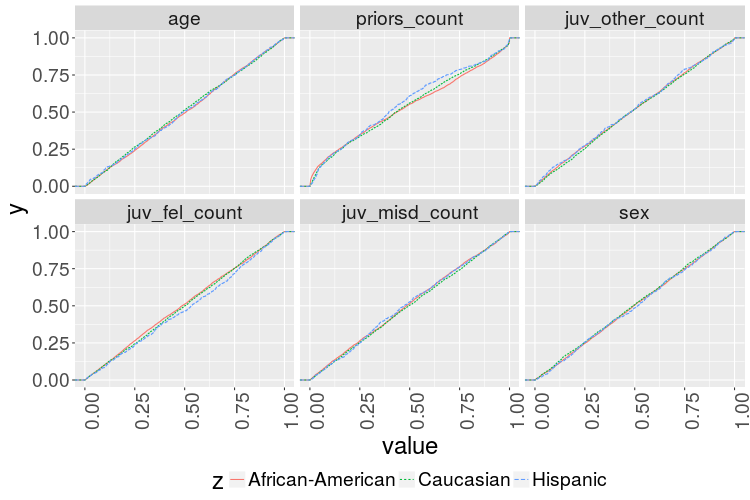}
 \caption{Plot of $F_{x_j \mid \tz^{(j)}}$ by race for each $x_j$ across 200 adjusted datasets} \label{fig:qdens}
\end{figure}

\begin{figure}[h]
\centering
\includegraphics[width=0.8\textwidth]{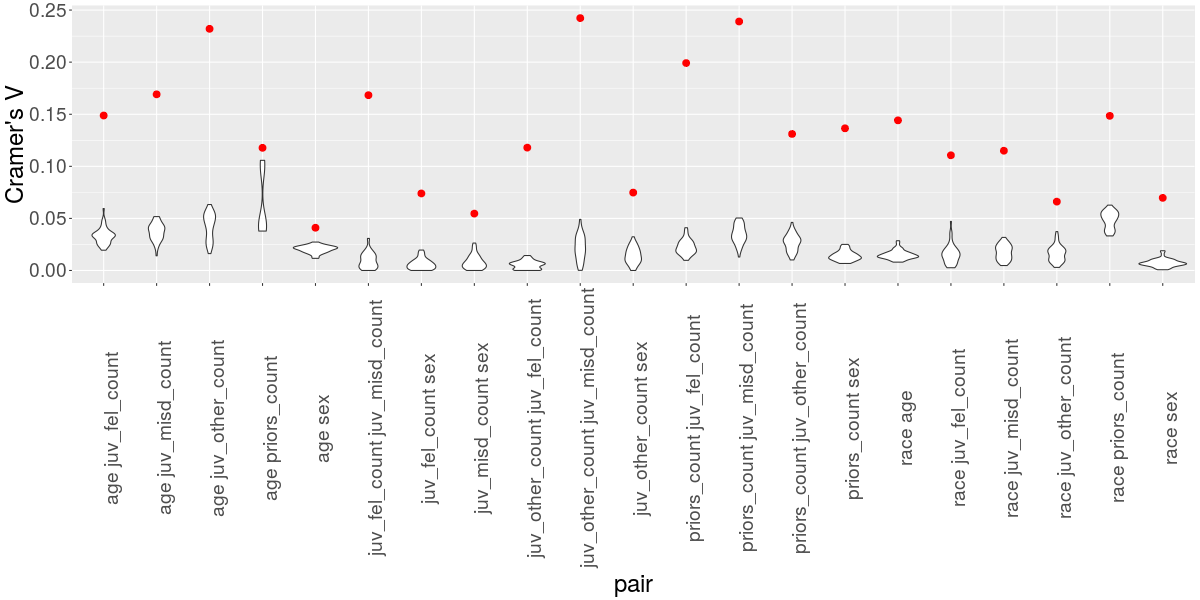}
\caption{Distribution of pairwise Cramer's V for every covariate pair across M transformed datasets for adjusted (white violin plot) and unadjusted datasets (red dot).} \label{fig:kspvals}
\end{figure}

\subsection{Predicting recidivism using transformed data} \label{sec:predictingRF}
Using each of the $M$ transformed datasets, we predict re-arrest within two years using random forest (RF).  We compare our results to the  ``unadjusted" model in which all covariates but race are used to explain $y$. We repeat this analysis using logistic regression in place of RF. The results of the logistic regression analysis are qualitatively very similar to those of RF and are deferred to the appendix.
%Because each variable that we transform is discrete, we also apply Corollary \ref{cor:optcoupleatomic} to create stochastic realizations and similarly produce $M$ pairwise-adjusted datasets. 
Figure \ref{fig:cdf-rf} shows the empirical density and cdf of the re-arrest probability for RF trained on data adjusted using our procedure (``adjusted") and trained using the ``unadjusted" data.  It is clear from the left panels of Figure \ref{fig:cdf-rf} that when trained on unadjusted data, large differences by race exist in the predictive distribution, with the distribution for African Americans having substantially more mass at probabilities of re-arrest greater than about $0.5$. In other words, when trained on unadjusted data omitting race, the model predicts that a large fraction of the Africa-American population is at high risk of recidivating. Predictions made by training RF on data adjusted using our procedure eliminate almost all racial disparities, as evidenced by the nearly identical distributions by race in the two panels on the right.  

\begin{figure}[h]
\centering
\includegraphics[width=.95\textwidth]{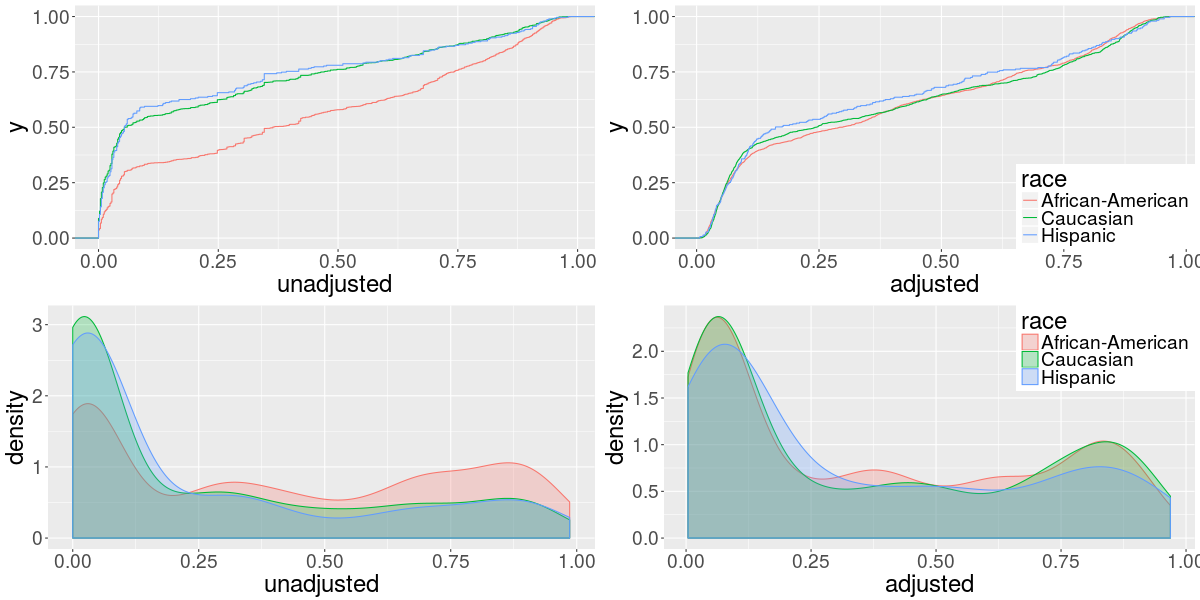}
\caption{density and cdf of predictions made using random forest by race using adjusted and unadjusted data} \label{fig:cdf-rf}
\end{figure}

Having established that predictions made using the transformed data are approximately {\it fair} under the definition we propose for this context, we now turn to fit assessment. In this case, assessment of how well our model predicts $Y$ using any notion of model performance is not especially well-motivated, as $Y$ is a biased measure of the phenomenon it is meant to measure. Nonetheless, we proceed to compare how well the predictions from RF fit to the unadjusted and adjusted datasets perform. In applying our procedure, some relevant information is lost. Thus, it is expected that the predictive accuracy of a model fit to the adjusted data will be lower than the model trained on unadjusted data. Figure \ref{fig:roc-rf} shows Receiver Operating Characteristic (ROC) curves for the predictions from the adjusted and unadjusted data. We find that these are not substantially different. For the unadjusted data, the area under the curve (AUC) was 0.72, and for the adjusted data, it was 0.71. We note that this AUC is on par with the AUC associated with Northpointe's predictions for this dataset (0.70) as reported in \cite{dieterich2016compas}.

\begin{figure}[h]
\centering
\includegraphics[width=0.5\textwidth]{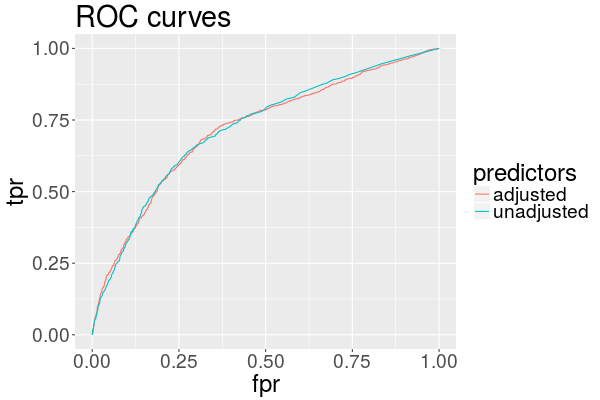}
\caption{roc for predictions made with random forest using adjusted and unadjusted data} \label{fig:roc-rf}
\end{figure}

Finally, we compare several measures of out-of-sample predictive performance across race under the adjusted and unadjusted models using $\hat{p}_i = 0.5$ as a threshold for classification. This is shown in Table \ref{tab:accuracy}, which reports accuracy (acc), positive predictive value (ppv), negative predictive value (npv), and false positive rate (fpr) for African Americans, Caucasians, and Hispanics.  The positive and negative predictive values exhibit disparities across race using both the adjusted and unadjusted data, but they are somewhat larger in the adjusted data. To make comparison easy, we show the mean absolute deviation (mad) using the median as the centroid, which increases after adjustment for both ppv and npv. In particular, the positive predictive value is reduced by adjustment for Caucasian and Hispanic prisoners and increased for African-Americans. Conversely, the negative predictive value is reduced by adjustment for African-Americans and increased by adjustment for Caucasians and Hispanics. 

On the other hand, adjustment appeared to decrease variation by race in false positive rates and only slightly increase variation in accuracy. The mad increases from 0 to 0.01 for accuracy, and decreases from 0.04 to 0.02 for fpr. The fpr is decreased for African-Americans and increased for Hispanics and Caucasians. Although compatibility (or lack thereof) between different notions of fairness has not been a focus of this paper, it is interesting that at least in this particular example, mitigating disparate impact actually led to an improvement in the overall similarity of accuracy and false positive rates by race, since variation in false positive rates fell considerably more than variation in accuracy increased after adjustment. Therefore, our procedure need not lead to a deterioration in fairness by all other metrics. This also suggests that optimizing a loss function that incorporates both similarity in false positive rates/accuracy and dependence of the predictive distribution on race may be sensible if deemed socially desirable by policymakers.

% latex table generated in R 3.3.2 by xtable 1.8-2 package
% Thu Mar  9 20:28:58 2017
\begin{table}[ht]
\centering
\caption{Measures of predictive accuracy for Random Forest estimated on adjusted and 
             unadjusted datasets} 
\label{tab:accuracy}
\begin{tabular}{llllll}
  \hline
procedure & metric & African-American & Caucasian & Hispanic & mad \\ 
  \hline
Adjusted & ppv & 0.75 & 0.63 & 0.65 & 0.04 \\ 
  Unadjusted & ppv & 0.71 & 0.67 & 0.66 & 0.02 \\ 
  Adjusted & npv & 0.63 & 0.71 & 0.72 & 0.03 \\ 
  Unadjusted & npv & 0.64 & 0.68 & 0.68 & 0.01 \\ 
  Adjusted & acc & 0.67 & 0.68 & 0.7 & 0.01 \\ 
  Unadjusted & acc & 0.67 & 0.68 & 0.67 & 0 \\ 
  Adjusted & fpr & 0.18 & 0.22 & 0.19 & 0.01 \\ 
  Unadjusted & fpr & 0.24 & 0.13 & 0.12 & 0.04 \\ 
   \hline
\end{tabular}
\end{table}

\section{Discussion} \label{sec:conclude}

We have presented a statistical framework for adjusting a dataset such that models trained to the data will be mutually independent of protected variables. The framework we suggest has extended the existing literature by allowing an arbitrary number of variables of arbitrary type to be both protected and adjusted so long as a suitable conditional model can be found to adequately describe the full conditional distribution of the permitted variables given the protected variables. The extension to allow for the adjustment of discrete variables is itself an advancement, as previous proposals for adjusting of the {\it training covariates} were only designed for adjusting continuous variables. Our second main contribution is that our method allows the user to make adjustments such that the output dataset is mutually independent of the protected variables, as opposed to pairwise independent. We have tested this procedure on a dataset used for recidivism prediction and demonstrated that, by using a chain of relatively standard regression models, we are able to produce an adjusted dataset in which all pairs of variables in the dataset are approximately independent. Further, when fitting both random forest and logistic regression models to the data, we have achieved predictive distributions of recidivism that are approximately independent of race-- the ultimate goal of the procedure. Even after the adjustment, we observe that the quality of the predictions in terms of AUC are on par with methods that are currently in use but do not attempt to achieve fair predictions. We expect our procedure would also be of value in data privacy and anonymization.

It is often suggested that an equivalent way to accomplish the goal of removing disparate impact would be to simply take the top $x\%$ from each class and designate them as the most risky. However, adjusting the training data has other benefits. By doing the adjustment, a dataset could be released to multiple organizations to build prediction models, and regardless of the details of their model, we would be guaranteed that the predictions would be fair under the definition we support in this case. Additionally, if a protected variable is continuous (e.g. protecting parental income in a tool meant to predict success in college for college admissions), simply taking the top $x\%$ within each class is infeasible, as classes would have to be made by discretizing, and people who happened to fall on the lower or upper end of each bin would be unfairly disadvantaged. Lastly, if multiple variables are to be protected, even if each variable has sufficient data in each class, the combination of classes across all variables may not, necessitating an approach like that proposed here. 

There are several avenues for future work in this area. First, judges are typically the ultimate consumers of predictive risk assessment in criminal justice. In order to ease interpretability, predictions are often discretized into categories like ``recommended'' versus ``not recommended'' or ``high'', ``medium'', and ``low'' risk. Substantial research is necessary to better understand how the presentation of risk scores affects decision-making by judges. For example, decisions might differ if judges were shown the predicted probability  rather than a coarsened measure such as ``high risk." Moreover, if a coarsened risk score is presented, it is likely that the number of categories and the language used to describe each category would affect decisionmaking. In some cases, the highest risk group is in fact more likely than not to remain on good behavior post-release. It is unclear whether judges interpret ``high risk'' in those cases accurately, or implicitly assume that they are in fact more likely than not to re-offend. 

Another important issue that has heretofore received relatively little attention in the algorithmic fairness literature is the extent and severity of sampling bias in datasets used for predictive risk assessment, and whether this bias can be mitigated using additional data in some of these cases. As we argue in the case of recidivism, the likelihood that re-arrest is a biased sample of post-release criminal activity makes it difficult to assess the level of disparate impact, which is why we argue for the independence standard in this case. While obtaining additional data to mitigate bias seems implausible in this example, in other settings it may be more tractable, and there is need to develop appropriate methods to combine multiple samples in predictive risk scoring. 

Finally, it is imperative that we engage experts in other fields, as well as the communities most likely to be affected by the model's predictions, to aid in developing mathematical characterizations of fairness that aptly reflect the social or legal meaning of the term. This needs to be done separately in every context for which a predictive model is to be developed, as the best mathematical characterization of fairness will likely vary by context. While it is important that statisticians and others with related expertise take part in helping those outside our field understand proposed mathematical definitions of fairness -- for example, independence versus equality of false positive rate or positive predictive value -- ultimately this area of research should be undertaken in conjunction with ethicists and policy experts. At the moment, the conversation around these issues seems to be centered mainly in computer science, machine learning, and statistics, which certainly cannot result in an optimal outcome, since the issue of ``what is fairness?'' from a legal or ethical perspective clearly lies outside our area of expertise.

\section*{Acknowledgments}
This work was supported by the Human Rights Data Analysis Group. The authors thank David Dunson and Patrick Ball for helpful comments and conversations during the preparation of this manuscript. 

\newpage
\bibliographystyle{plainnat}
\bibliography{rnl}

\newpage

\begin{appendix}
\section{Appendix}
This section presents the results of applying logistic regression to predict $Y$. In an analysis that mirrors that presented in section \ref{sec:predictingRF}, we compare a logistic regression model applied to unadjusted data to one applied to data that has been adjusted under the procedure we propose. Figure \ref{fig:cdf-logreg} shows the cumulative distribution and density of the predictions by race. Like we found when using RF, omitting the race variable does little to reduce discrepencies in the distribution of predictions 
by race. However, a logistic regression model applied to the adjusted datasets result in very similar distributions of fitted values by race. Figure \ref{fig:roc-logreg} shows the ROC curves for each of the adjustment procedures. In this case also, there is little substantive difference between the methods in terms of this measure of predictive accuracy. 

\begin{figure}[h]
\centering
\includegraphics[width=0.95\textwidth]{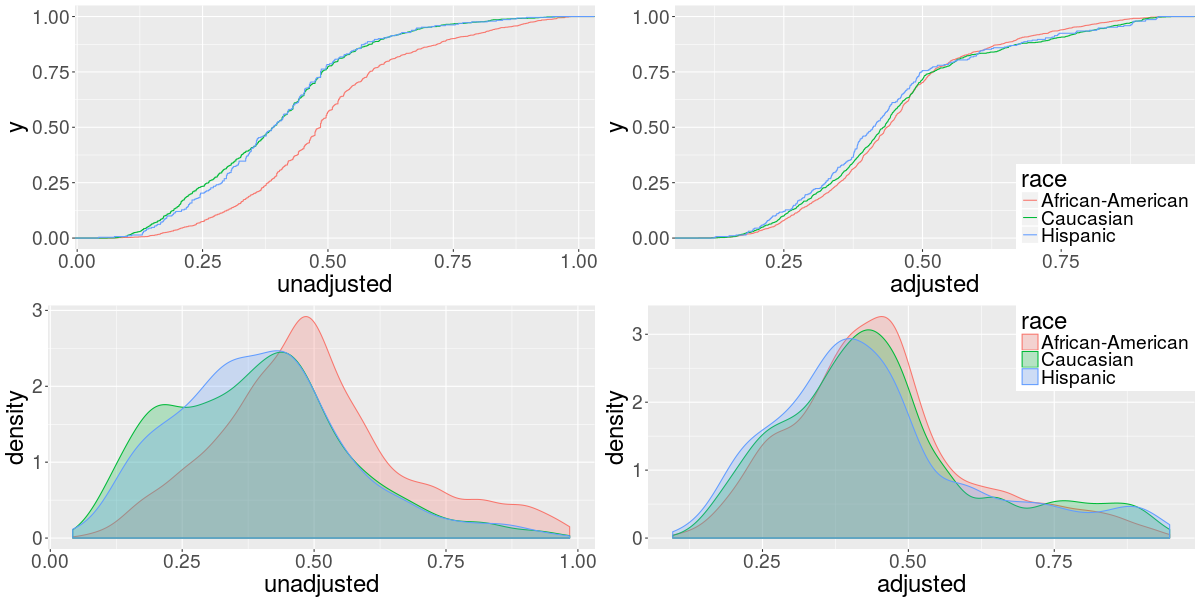}
\caption{The cumulative distribution (top)  and the density (bottom) of the predictions by race.} \label{fig:cdf-logreg}
\end{figure}

\begin{figure}[h]
\centering
\includegraphics[width=0.5\textwidth]{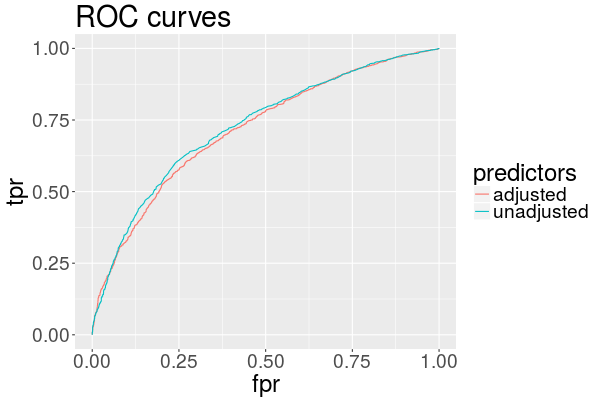}
\caption{ROC curve showing predictive performance of each of the adjustment procedures.} \label{fig:roc-logreg}
\end{figure}

\end{appendix}

\end{document}